\theoremstyle{plain}
\newtheorem{theorem}{Theorem}[section]
\theoremstyle{definition}
\theoremstyle{remark}
\title{Buffered Asynchronous Secure Aggregation \\ for Cross-Device Federated Learning}
\author{%
   Kun Wang, Yi-Rui Yang, Wu-Jun Li
   \thanks{Corresponding author.}
    \\
  National Key Laboratory for Novel Software Technology\\
  Department of Computer Science and Technology, Nanjing University, China\\
  \texttt{wangk@smail.nju.edu.cn, }\texttt{yangyr@smail.nju.edu.cn, }\texttt{liwujun@nju.edu.cn} \\
}
\begin{document}

\maketitle

\begin{abstract}
  Asynchronous federated learning~(AFL) is an effective method to address the challenge of device heterogeneity in cross-device federated learning. However, AFL is usually incompatible with existing secure aggregation protocols used to protect user privacy in federated learning because most existing secure aggregation protocols are based on synchronous aggregation. To address this problem, we propose a novel secure aggregation protocol named \underline{b}uffered \underline{a}synchronous \underline{s}ecure \underline{a}ggregation~(BASA) in this paper. Compared with existing protocols, BASA is fully compatible with AFL and provides secure aggregation under the condition that each user only needs one round of communication with the server without relying on any synchronous interaction among users. Based on BASA, we propose the first AFL method which achieves secure aggregation without extra requirements on hardware. We empirically demonstrate that BASA outperforms existing secure aggregation protocols for cross-device federated learning in terms of training efficiency and scalability.
\end{abstract}

\section{Introduction}
Federated learning~(FL) \citep{mcmahan2017communication} is a distributed learning paradigm that allows numerous decentralized users~(or called workers), like smartphones or edge devices, to collaboratively train a shared model while keeping training data private on their own devices. A typical FL training process involves multiple users under the coordination of a central server. During each training round, users perform local training using their private data and only share the model updates or gradients with the central server. Then, the central server collects the shared model updates from users and performs aggregation to update the global model. This mechanism eliminates the transmission and storage overhead of large datasets and prevents sensitive training data from being exposed. 

Although FL intends to protect the privacy of users' local training data, recent works have pointed out that privacy risks still exist in FL. Specifically, a malicious central server that adopts gradient leakage attack \citep{melis2019exploiting, zhu2019deep, geiping2020inverting} may be able to infer sensitive properties or even reconstruct the original training data based on user-uploaded model updates. To defend against gradient leakage attacks, the secure aggregation~(SA) protocol \citep{bonawitz2017practical} has become one of the key components of FL as an effective privacy-preserving technique. SA ensures that the server can only learn the aggregated model updates, yet can not get access to the local model update of any individual user.

However, most SA protocols are based on synchronous federated learning~(SFL) and therefore face performance issues in cross-device environments. As participant devices differ in computational power, battery life, data volume, and network bandwidth, the time each device takes to complete training and send updates to the server can vary significantly \citep{nishio2019client, Diao0T21}. The central server may spend excessive time waiting for the slowest device, leading to a reduction in the overall training efficiency. 

A recent method named FedBuff \citep{nguyen2022federated} adopts trusted execution environments~(TEEs) to combine secure aggregation with asynchronous federated learning~(AFL) \citep{xie2019asynchronous,xu2023asynchronous} to tackle the problem. However, TEEs must be supported by extra hardware and may not always be available in real applications. LightSecAgg~(LSA) \citep{so2022lightsecagg} is also proposed for AFL, but it requires users to share local masks with all other users, which might be impractical or even impossible in cross-device environments. Moreover, LSA still needs a synchronization phase to recover the aggregated model updates.

In this paper, we propose a novel SA protocol, called buffered asynchronous secure aggregation~(BASA), for AFL in cross-device environments. The contributions are summarized as follows:

\begin{itemize}  
\item {BASA is a secure aggregation protocol that is fully compatible with AFL and is practical and efficient for cross-device federated learning. }
\item {
 BASA is the first pairwise mask-based asynchronous secure aggregation protocol and is the first to adopt attribute-based encryption~(ABE) for providing asynchronous access control.
}
\item {Based on BASA, we propose the first AFL method which achieves secure aggregation without extra hardware requirements.
}
\item { Experiments show that BASA is much more efficient than existing secure aggregation protocols for cross-device federated learning.
	}
\end{itemize}

\section{Background}
\label{Background}

\paragraph{Federated Learning} In federated learning~(FL) \citep{mcmahan2017communication}, multiple data owners~(also called users or workers) collaboratively train a global machine learning model under the coordination of a central server, while keeping training data private at their own devices. The objective function of most FL problems can be formulated as follows: 	
\begin{equation}
	\min_{\mathbf{w}} \mathcal{L}(\mathbf{w})  = \sum_{i=1}^m p_i\mathcal{L}_{i}(\mathbf{w}),
\end{equation}
where $p_i \ge 0$ and $\sum_{i=1}^{m} p_i = 1$, $\mathbf{w}$ is the model parameter, $m$ is the number of users~(workers), $\mathcal{L}_i(\mathbf{w})$ is the local loss of the $i$-th user. In typical algorithms of federated learning, the server first broadcasts current model parameter $\mathbf{w}^t$ to users at each round $t$, and selects~(or waits for) a subset of users $\mathcal{U}$ to participate in training. Then, each selected~(or participating) user $i$ performs local training using its private training data, and shares the local model update $\Delta \mathbf{w}_i^{t}$ with the server. Finally, the server performs aggregation on collected model updates from $\mathcal{U}$ to update the global model. The mostly adopted method is to compute the weighted mean:
\begin{equation}
	\mathbf{w}^{t+1} = \mathbf{w}^{t} -\frac{1}{\sum_{j \in \mathcal{U}}{p_j}} \sum_{i \in \mathcal{U}}  p_i\Delta \mathbf{w}_i^{t}.
\end{equation}

According to the application scenarios, FL can be categorized into cross-silo FL and cross-device FL \citep{mammen2021federated, kairouz2021advances, zhang2020batchcrypt}.  Cross-silo FL typically involves fewer but powerful participants with substantial data and stable environments, such as organizations and institutions. Cross-device FL typically involves a vast number of individual devices with small data capacities, limited computation powers, and unreliable communications.

\paragraph{Asynchronous Federated Learning}
In large-scale cross-device environments, the devices usually have different computational power, communication bandwidth, and data volume. This often results in a ``straggler" problem: the central server may spend too much time waiting for the slowest device, leading to a severe reduction in the overall training efficiency. Asynchronous federated learning~(AFL) \citep{dutta2018slow, xie2019asynchronous, xu2023asynchronous} is more efficient than SFL in such situations. In AFL, the devices are not required to wait for a synchronized global update across all participants. The server updates the global model as soon as it receives a new model update \citep{xie2019asynchronous}, or when it receives a certain number of updates from users \citep{nguyen2022federated}. 

\paragraph{Secure Aggregation}  Secure aggregation~(SA) \citep{bonawitz2017practical, bell2020secure, so2021turbo} is an effective method to defend against privacy attacks \citep{ melis2019exploiting, zhu2019deep, geiping2020inverting}, by enabling the server to get aggregated results without knowing the local model updates of individual users. In the SA protocol, each pair of users agrees on a pairwise random seed $s_{uv}$. Meanwhile, each user $u$ samples an additional private random seed $b_{u}$. Then, user $u$ masks the local model updates as follows:
\begin{equation}
	\mathbf{y}_u = \mathbf{x}_u +  \text{PRG}(b_{u}) +  \sum_{v, u<v} \text{PRG}(s_{uv}) - \sum_{v, u>v} \text{PRG}(s_{uv}),
\end{equation}
where $\mathbf{x}_u$ is the local model update of user $u$, and $\text{PRG}$ is a pseudo random generator. The server collects all the masked updates of surviving users and recovers the required random seeds based on Shamir's secret sharing scheme \citep{shamir1979share}
. Finally, the server gets the aggregated model updates of the surviving users:
\begin{equation}
	\sum_{u\in\mathcal{U}} \mathbf{x}_u  = \sum_{u\in\mathcal{U}} (\mathbf{y}_u - \text{PRG}(b_{u})) 
	 + \sum_{u\in\mathcal{D}} (\sum_{v\in \mathcal{U}, u<v}\text{PRG}(s_{uv}) - \sum_{v\in \mathcal{U}, u>v}\text{PRG}(s_{uv})).
\end{equation}
where $\mathcal{U}$ denotes the set of surviving users, and $\mathcal{D}$ denotes the set of dropped users. 


\paragraph{Attribute-based Encryption} Attribute-based encryption~(ABE) \citep{zhang2020attribute} is a promising cryptographic primitive that enables fine-grained access control of encrypted data. ABE can be divided into two categories: ciphertext-policy attribute-based encryption~(CP-ABE) \citep{bethencourt2007ciphertext} and key-policy attribute-based encryption~(KP-ABE) \citep{goyal2006attribute}. We utilize CP-ABE in this work. In CP-ABE, the data provider specifies the access policy for the ciphertext based on attributes and does not need to know the identities of data users. Users with attributes matching the policy can access the data without direct authorization from the data provider. An attribute authority~(AA) manages the attributes and the corresponding secret keys. There are four main algorithms in a CP-ABE scheme:
\begin{itemize}
	\item \verb|Setup|$()$: run by the AA at the beginning, generates the public key $\mathrm{PK}$ and master key $\mathrm{MK}$ for the system.
	
	\item \verb|KeyGen|$(\mathrm{PK}, \mathrm{MK}, \mathcal{A})$: run by the AA, generates the attribute secret key $\mathrm{SK}$ for attribute set $\mathcal{A}$.
	
	\item \verb|Encrypt|$(\mathrm{PK}, M, \mathbb{A})$: run by the data owner, encrypts message $M$ under access policy $\mathbb{A}$, and outputs a ciphertext $\mathrm{CT}_{\mathbb{A}}$.
	
	\item \verb|Decrypt|$(\mathrm{PK}, \mathrm{CT}_{\mathbb{A}}, \mathrm{SK})$: run by the data user, decrypts $\mathrm{CT}_{\mathbb{A}}$ if $\mathrm{SK}$ matches the corresponding access policy $\mathbb{A}$, and return $\perp$ otherwise.
\end{itemize}

\section{Method}
\label{method}
In this section, we present the details of our novel secure aggregation protocol named buffered asynchronous secure aggregation~(BASA).
\begin{figure}[t]
	\vskip 0.2in
	\begin{center}
		\centerline{\includegraphics[width=0.7\linewidth]{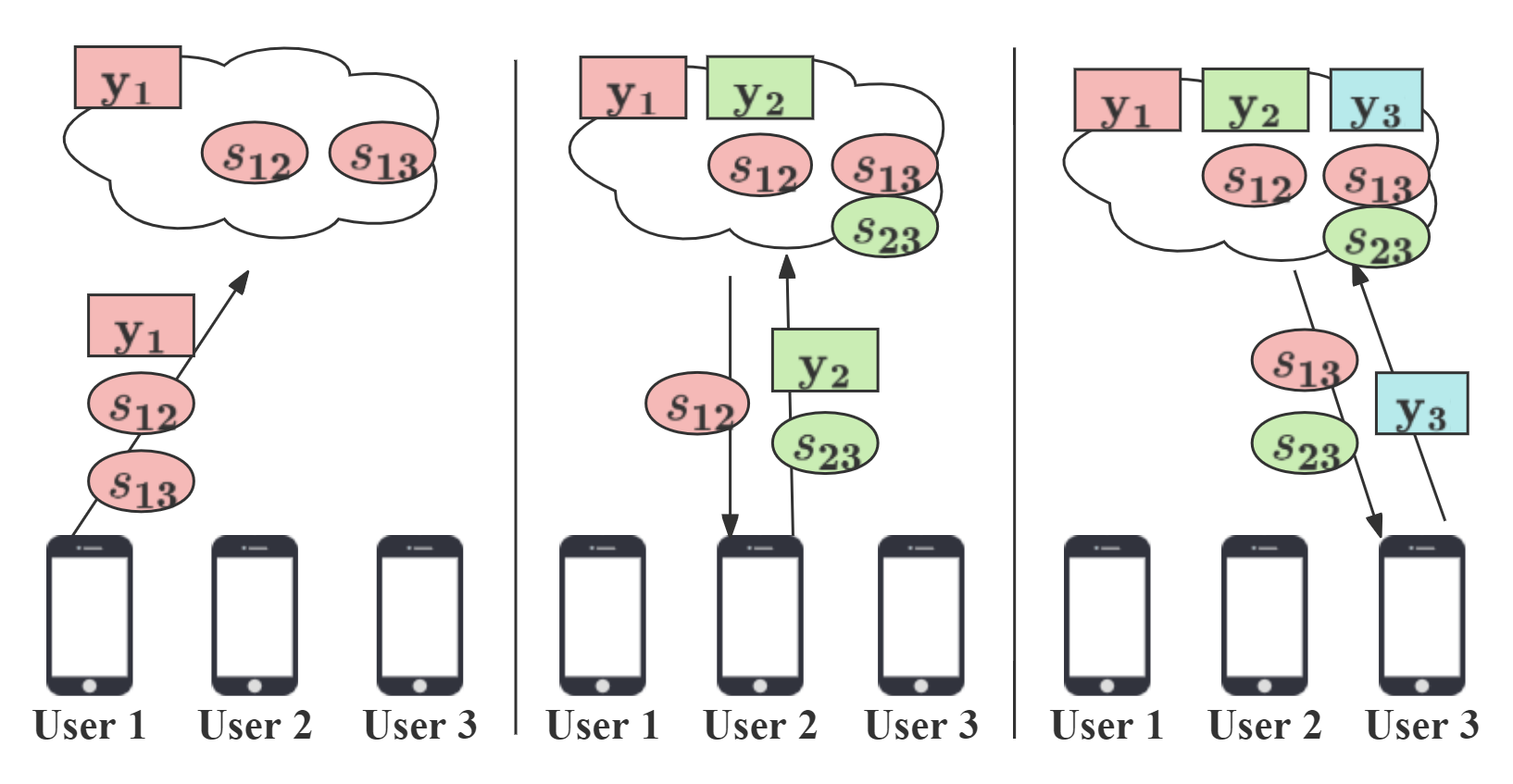}}
		\caption{A brief illustration of the pairwise mask-based framework of BASA with a simple case study of three users. Each user uploads local inputs in turn from left to right. $\mathbf{y}_1, \mathbf{y}_2, \mathbf{y}_3$ are the masked inputs, and $s_{12}, s_{13}, s_{23}$ are encrypted ciphertexts of the corresponding random seeds.}
		\label{illustration-BASA}
	\end{center}
	\vskip -0.2in
\end{figure}
\subsection{Intuition}
Secure aggregation cannot be directly integrated with AFL due to the following two main challenges:
\begin{itemize}
\item Secure aggregation typically requires a fixed cohort of users to participate in each training round. Individual users' local model updates are mixed with those of other users in the cohort to ensure privacy. However, users independently send model updates in AFL, and the server cannot predetermine which clients will be involved in aggregation. 
\item Some techniques in secure aggregation (e.g., secret sharing) rely on synchronous communication where users need to collaborate and share information synchronously. This contradicts with AFL in which users are not necessarily online at the same time and they communicate with the server asynchronously. 
\end{itemize}

Our basic idea is to adopt a buffered aggregation scheme. More specifically, during the training process, the server maintains a secure buffer of size $K$ and temporarily stores new local model updates in the buffer as soon as they are received. Once the buffer accumulates $K$ model updates, the server proceeds to aggregate all the model updates stored in the buffer and updates the global model. The secure buffer should ensure that the server can only learn the aggregated updates but cannot learn any individual user update in this buffer.

\paragraph{Pairwise Mask-based Framework} To implement the secure buffer, we develop a pairwise mask-based framework. We observe that due to device heterogeneity, users tend to complete training at different times and upload local model updates sequentially rather than uploading them at the same time. Hence, a user who uploads beforehand can generate the pairwise mask locally and upload the corresponding random seeds to the server along with the masked local model updates. In addition to buffering the model updates, the server temporarily stores the random seeds of masks. Then, the subsequent user can download that random seed on arrival and put it to the local model updates so that all masks can be canceled out when aggregated. 

We use a simple case study to illustrate this in Figure \ref{illustration-BASA}, in which we have a server with buffer size 3 and three users $U_1, U_2, U_3$ upload their local inputs $\mathbf{x}_1, \mathbf{x}_2, \mathbf{x}_3$ sequentially. We assume every input $\mathbf{x}_i$ has been quantized by users so that $\mathbf{x}_i \in \mathbb{F}_{q}^d$, where $d$ is the model dimension and $q$ is the size of the finite field. 

First, $U_1$ locally generates two random seeds $s_{12}$ and $s_{13}$, and masks $\mathbf{x}_1$ as $$\mathbf{y}_1 = \mathbf{x}_1 + \text{PRG}(s_{12}) + \text{PRG}(s_{13}).$$ 
Then, $U_1$ uploads $\mathbf{y}_1$, $s_{12}$ and $s_{13}$ to the server. When $U_2$ comes, it downloads $s_{12}$ from the server and locally generates a random seed $s_{23}$. $U_2$ masks $\mathbf{x}_2$ as 
$$\mathbf{y}_2 = \mathbf{x}_2 - \text{PRG}(s_{12}) + \text{PRG}(s_{23}).$$
Similarly, $\mathbf{y}_2$ and $s_{13}$ are also sent to the server. Finally, $U_3$ downloads $s_{13}$ and $s_{23}$, and masks the local input $\mathbf{x}_3$ as $$\mathbf{y}_3 = \mathbf{x}_3 - \text{PRG}(s_{13}) - \text{PRG}(s_{23}).$$ $U_3$ only needs to send $\mathbf{y}_3$ to the server. After receiving $\mathbf{y}_3$, the server performs aggregation over $\mathbf{y}_1, \mathbf{y}_2$ and $\mathbf{y}_3$. The above method ensures that $\sum_{i=1}^3 \mathbf{y}_i = \sum_{i=1}^3 \mathbf{x}_i$, as all random masks can be canceled out when aggregation.

\paragraph{ABE-based Random Seed Sharing} To protect user privacy, the users must upload encrypted random seeds to prevent the server from knowing the pairwise masks. However, the identity of the decrypting user cannot be predetermined at the time of encryption because the subsequent users have not yet connected.

We tackle this problem by incorporating attribute-based encryption~(ABE). More specifically, we assign attributes to positions within the buffer. During encryption, users can specify the access policy for the ciphertext. When subsequent users connect with the server, they can obtain the matching attributes based on their location in the buffer, and decrypt the corresponding ciphertexts. For example, $U_1$ can specify that $s_{12}$ can only be decrypted by users with the attribute ``Second". When $U_2$ uploads its model update, it is the second user in the buffer and gets the attribute ``Second". Therefore, $U_2$ can decrypt $s_{12}$ without authorization from $U_1$. In this way, each user needs only a single round of communication with the server and does not have to be online all the time.


\subsection{Detailed Protocol}

\begin{algorithm}[t]
	\caption{BASA-Server}
	\label{alg:basa}
	\begin{algorithmic}[1]
		\STATE {\bfseries Input:} buffer size $K$.
		\STATE {\bfseries Output:} the aggregated result $\bar{\mathbf{y}}$.
		\STATE Initialize $k \leftarrow 0, \bar{\mathbf{y}} \leftarrow 0, \mathcal{A} \leftarrow \{a_k \mid k<K\} $;
		\FOR{$ i = 0 $ to $ K$}
		\STATE $\mathcal{C}_i \leftarrow \emptyset$;
		\ENDFOR
		\LOOP
		\IF{$\text{receive connection from user $u$ }$}
		\STATE Send $\mathcal{A}$, $k$, $\mathcal{C}_k$ to user $u$;
            \IF{timeout}
                \STATE Abort current connection;
            \ENDIF
		\STATE $\mathbf{y}_k \leftarrow $ masked inputs from user $u$;
		\STATE $\{c_{kj} \mid k<j<K\} \leftarrow $  ciphertexts from user $u$;
		\FOR{$ j = k+1 $ to $ K$}
		\STATE $\mathcal{C}_j \leftarrow \mathcal{C}_j \cup \{c_{kj}\} $;
		\ENDFOR 
		\STATE $\bar{\mathbf{y}} \leftarrow \bar{\mathbf{y}} + \mathbf{y}_k$ \STATE $k \leftarrow k+1$;
		\ENDIF
		\IF{$k ==K$}
		\STATE Output $\bar{\mathbf{y}}$ as the aggregated result;
		\ENDIF
		\ENDLOOP
	\end{algorithmic}
\end{algorithm}

\begin{algorithm}[t]
	\caption{BASA-User}
	\label{alg:basa_user}
	\begin{algorithmic}[1]
		\STATE {\bfseries Input:} user input $\mathbf{x}_k$, buffer size $K$, position in buffer $k$, attributes set $\mathcal{A} = \{a_k \mid k<K\}$, ciphertext set $\mathcal{C}_k$.
		\STATE Initialize $\mathbf{y}_k \leftarrow \mathbf{x}_k;$ 
		\STATE Request attribute secret key $\mathrm{SK}_{k}$ from AA;
		\FOR {$ i = 1 $ to $ k$}
		\STATE $s_{ik} \leftarrow \mathrm{Dec}(c_{ik}, \mathrm{SK}_{k})$; 
		\STATE $\mathbf{y}_k  \leftarrow \mathbf{y}_k - \text{PRG}(s_{ik})$;
		\ENDFOR
		\FOR {$ j = k+1 $ to $ K$}
		\STATE $s_{kj} \leftarrow $ generated random seed;
		\STATE $\mathbf{y}_k  \leftarrow \mathbf{y}_k+ \text{PRG}(s_{kj})$;
		\STATE  $c_{kj} = \mathrm{Enc}(s_{kj}, a_j)$;
		\ENDFOR
		\STATE Send $\mathbf{y}_k$ to server;
		\STATE Send $\{c_{kj} \mid k<j<K\} $ to server;
	\end{algorithmic}
\end{algorithm}

The protocol of BASA is detailed in Algorithm \ref{alg:basa} and  Algorithm \ref{alg:basa_user}. It is composed of a server, $N$ users, and an attribute authority~(AA). The AA is a trusted party for ABE that publishes the system's public keys, and issues attribute secret keys for ABE. The server maintains a secure buffer of size $K$. The users independently choose to participate in aggregation based on their availability. Assume every update of users has been quantized in $\mathbb{F}_{q}^d$, where $d$ is the model dimension and $q$ is the finite field size.

Initially, the server associates each position $k$ of secure buffer with an attribute $a_k$ and assigns a ciphertext buffer $\mathcal{C}_k$ for each position. The AA runs the system setup algorithm of ABE. When a new user wishes to upload a model update, it first connects to the server. Suppose the user $u$ is the $k$-th user to connect. The server sends to $u$ the attributes set $\mathcal{A} = \{a_i \mid i<K\}$, the position $k$ of $u$ in the secure buffer, and all ciphertext $c_{ik}$ in $\mathcal{C}_k$, for $0< i < k$.

After connection, user $u$ requires the attribute secret key $\mathrm{SK}_{k}$ corresponding to its position attribute $a_k$ from the AA. Then $u$ can use $\mathrm{SK}_{k}$ to decrypt all $c_{ik}$, for $0< i < k$.
User $u$ also generates random seed $s_{kj}$ and encrypts it under the attribute $a_j$ for all $k < j < K$.
Finally, user $u$ masks its local input $\mathbf{x}_k$ as:
\begin{equation}
	\mathbf{y}_k =  \mathbf{x}_k- \sum_{0< i < k} \text{PRG}(s_{ik}) + \sum_{k < j < K} \text{PRG}(s_{kj}),
\end{equation}
and sends $\mathbf{y}_k$ and all $c_{kj}$ for $k < j < K$ to the server. 

The server stores the masked inputs in the buffer and puts each $c_{kj}$ in $\mathcal{C}_j$. Upon receiving $K$ masked inputs from users, the server conducts aggregation $\bar{\mathbf{y}} = \sum_{i<K}\mathbf{y}_i$. Furthermore, the server needs to regenerate the attribute set $\mathcal{A}$ to avoid the user still using the attribute secret key in the subsequent aggregation process. If user $u$ drops out after connecting to the server, the server can wait for a timeout time $t_{out}$, then aborts the connection with $u$ and continues to listen to subsequent connections.

\subsection{AFL with BASA}

The BASA protocol can be directly employed in AFL. More specifically, the server sends the global model to users asynchronously and runs \verb|BASA-Server| to receive model updates uploaded by users. Upon receiving $K$ masked updates, the server can update the global model using the aggregated results of BASA. The detailed training algorithm on the server is presented in Appendix \ref{BASA-AFL}.

Each user receives the global model from the server and performs local training. Specifically, each user updates the model parameter with multiple SGD steps using its local dataset, as in \mbox{FedBuff}~\citep{nguyen2022federated}. After local training, the user expands the local model update $\Delta_k$ to a one-dimension vector and quantizes it to the finite domain $\mathbb{F}_{q}^d$, where $d$ is the model dimension and $q$ is the size of the finite field. We use the same stochastic quantization strategy as LSA \citep{so2022lightsecagg}. Finally, the user takes it as the input of \verb|BASA-User| to upload to the server. 

AFL with BASA can also consider the effect of model staleness during aggregation. Specifically, when a user completes local training and starts a connection, the server sends the user the current global model timestamp $t$. The user then calculates the staleness factor $\alpha_k = S(t - \tau_k)$ from the received global timestamp $t$ and the local timestamp $\tau_k$ and masks the local model update as:
\begin{equation}
	\tilde\Delta_k = \alpha_k \Delta_k - \sum_{0< i < k} \text{PRG}(s_{ik}) + \sum_{k < j < K} \text{PRG}(s_{kj}),
\end{equation}
$\tilde\Delta_k$ and $\alpha_k$ are sent to the server, and the aggregation is modified as
$\Bar{\Delta} = {\sum_{k<K} \tilde\Delta_k} / {\sum_{k<K}\alpha_k}$.

\subsection{Security Analysis}
\label{Security}

\paragraph{Threat Model} We assume the server and users are honest but curious. They will correctly follow the protocol but are interested to learn the private information of other users as much as possible. Moreover, some users may collude with the server by sharing their inputs and the associated random masks. Furthermore, we assume the AA is a trusted third party. 

\paragraph{Security Guarantee} We argue that BASA can provide security under the honest-but-curious threat model. Specifically, when the buffer size is $K$, BASA is able to protect the honest user's local model updates in the presence of at most $K-2$ colluded users. Let $\mathcal{U}$ be the set of participating users in a round of the aggregation, $\mathcal{S} \subseteq \mathcal{U}$ be the set of colluded users in $\mathcal{U}$. We have the following result:
\begin{theorem}
	\label{thm:security}
	  In BASA, the server can only learn the partial aggregation result over the inputs of honest users $\sum_{i \in \mathcal{U} \setminus \mathcal{S}} \mathbf{x}_i$, under the honest-but-curious threat model.
\end{theorem}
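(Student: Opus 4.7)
The plan is to establish the theorem in two parts: first a \emph{correctness}-style claim showing the server can in fact compute $\sum_{i \in \mathcal{U} \setminus \mathcal{S}} \mathbf{x}_i$ from its view, and second a \emph{simulation-based} claim showing the server's view can be reproduced from nothing more than this partial sum together with the colluders' private state. The structure mirrors standard security proofs for pairwise-mask secure aggregation, with the ABE-protected seed exchange playing the role of the secret-sharing round.

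For the first part I would inspect the aggregate $\sum_{k<K}\mathbf{y}_k$ and observe that for every pair $i<j$ the term $\text{PRG}(s_{ij})$ appears once with a $+$ sign (contributed by user $i$, since $i<j$) and once with a $-$ sign (contributed by user $j$), so all pairwise masks telescope and $\sum_{k<K}\mathbf{y}_k = \sum_{k<K}\mathbf{x}_k$ exactly over $\mathbb{F}_q^d$. Since colluded users voluntarily reveal their $\mathbf{x}_i$ to the server, subtracting those contributions yields the partial honest aggregate.

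For the second part I would characterise the server's view as: (i) masked values $\{\mathbf{y}_k\}_{k \in \mathcal{U}\setminus \mathcal{S}}$ and ciphertexts $\{c_{kj}\}$ produced by honest users, and (ii) all secrets shared by colluders. The core lemma is that for any honest-to-honest pair $i<j$ with $i,j \notin \mathcal{S}$, the seed $s_{ij}$ is computationally hidden from the server: the ciphertext $c_{ij}$ is encrypted under attribute $a_j$, and because the AA is trusted the secret key $\mathrm{SK}_j$ is issued only to the honest user occupying position $j$ (with the attribute set regenerated between rounds, preventing reuse by past users). Semantic security of the CP-ABE scheme therefore implies $s_{ij}$ is pseudorandom in the server's view, and by pseudorandomness of $\text{PRG}$ so is $\text{PRG}(s_{ij}) \in \mathbb{F}_q^d$.

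Building on this, I would construct a simulator $\mathrm{Sim}$ that, given only the partial aggregate $Y = \sum_{i \in \mathcal{U}\setminus\mathcal{S}}\mathbf{x}_i$ and the colluders' inputs and seeds, outputs uniformly random masked values for all but one honest user, sets the remaining honest masked value so that the total is consistent with $Y$ and the colluder contributions, and replaces every honest-to-honest ciphertext by an ABE encryption of an independent dummy seed. A hybrid argument, swapping one honest-to-honest seed at a time and invoking ABE semantic security followed by PRG pseudorandomness at each step, will show $\mathrm{Sim}$'s output is computationally indistinguishable from the real view. The main obstacle will be the bookkeeping when colluded users are interleaved with honest users in the buffer, because colluders legitimately decrypt every $s_{ij}$ whose destination is in $\mathcal{S}$; the assumption $|\mathcal{S}| \le K-2$ is exactly what guarantees the existence of at least one honest-to-honest pair, so at least one mask remains hidden and the sum of honest masked inputs is not fully determined by the colluded information alone. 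Once the hybrid chain is in place, indistinguishability of views yields the theorem.
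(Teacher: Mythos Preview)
Your proposal is correct and in fact considerably more rigorous than the paper's own argument, but the route differs. The paper does \emph{not} give a simulation-based proof with a hybrid argument; instead it performs a short operational analysis: for each prefix $k\le K$ of accepted updates it writes out $\sum_{i\le k}\tilde\Delta_i$, subtracts every quantity the server knows through collusion (all $\Delta_i$ with $i\in\mathcal{S}$ and all seeds $s_{ij}$ with $i\in\mathcal{S}$ or $j\in\mathcal{S}$), and observes that what remains is either $\sum_{i\in\mathcal{U}\setminus\mathcal{S},\,i\le k}\Delta_i$ plus at least one surviving $\text{PRG}(s_{ij})$ with $i,j\in\mathcal{U}\setminus\mathcal{S}$ (hence ``looks uniformly random''), or, when every position $j>k$ is colluded, exactly the honest partial sum. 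That is the entire proof in the paper.

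Your decomposition into a correctness step followed by an explicit simulator with an ABE-then-PRG hybrid chain is the standard cryptographic formalisation of ``the server can only learn $X$,'' and it is what one would need to make the statement precise as computational indistinguishability. It also makes explicit something the paper glosses over, namely that semantic security of the CP-ABE scheme (together with the trusted AA and per-round attribute regeneration) is the actual reduction target for hiding honest-to-honest seeds. The paper's prefix-by-prefix subtraction argument is shorter and conveys the intuition, but it never formally defines the adversary's view or constructs a simulator; your approach buys a genuine security proof at the cost of the extra bookkeeping you already anticipate for interleaved colluders.
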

\begin{proof} 
	Consider in the execution of BASA, the server has accepted $k$ updates from users. At this point, the partial aggregation result is:
	\begin{equation}
	\label{eq:security_1}
		\sum_{i \leq k} \tilde\Delta_i = \sum_{i \leq k} \Delta_i + \sum_{i \leq k} \left(\sum_{k<j<K} \text{PRG}\left(s_{ij}\right) \right).
	\end{equation}
	Under the honest-but-curious threat model, the $\Delta_i$ for all $ i \in \mathcal{S}$, and the random seeds $s_{ij}$ for $i \in \mathcal{S}$ or $j \in \mathcal{S}$, are shared with the server by colluded users. But the random seeds $s_{ij}$ where $i,j \in \mathcal{U} \setminus \mathcal{S}$ are not revealed to the server and other users, due to the access control provided by the ABE scheme. Therefore, the server can remove all known entries in (\ref{eq:security_1}) and obtains:
	\begin{equation}
		\label{eq:security_2}
		\sum_{i \in \mathcal{U} \setminus \mathcal{S}, i \leq k} \Delta_i + \sum_{i \in \mathcal{U} \setminus \mathcal{S}, i \leq k}  \left(\sum_{j \in \mathcal{U} \setminus \mathcal{S}, k<j<K} \text{PRG}\left(s_{ij}\right) \right).
	\end{equation}
	If there exists a user $j \in \mathcal{U} \setminus \mathcal{S}$ and $j > k$, the partical result obtained by server looks uniformly random due to the presence of $\text{PRG}\left(s_{ij}\right)$. If every user $j > k$ colludes with the server, the latter item in (\ref{eq:security_2}) is zero and only $\sum_{i \in \mathcal{U} \setminus \mathcal{S}, i \leq k} \Delta_i$ is left. This holds for all $k < K$.
\end{proof}
Therefore, as long as there exists more than one honest user in the buffer, the local model updates of individual honest users are not revealed to the server. In other words, the honest user's local model updates are protected in the presence of at most $K-2$ colluded users.

\section{Experiments}
\label{experiments}

In this section, we empirically evaluate our proposed BASA protocol and other baselines in cross-device federated learning.

\subsection{Setup}
\label{experiments-setup}
\paragraph{Baselines} We compare BASA with three baseline protocols: SecAgg~\citep{bonawitz2017practical} in SFL, LightSecAgg~(LSA) in SFL~\citep{so2022lightsecagg}, and LightSecAgg~(LSA) in AFL~\citep{so2021secure}. Furthermore, we compare BASA with FedBuff \citep{nguyen2022federated} without secure aggregation to evaluate the additional time cost introduced by secure aggregation. For SFL, we adopt FedAvg \citep{mcmahan2017communication} as the basic training algorithm. For AFL, we adopt the FedBuff as the basic training algorithm. We don't compare with FedBuff under TEE because we only consider cross-device environments without such extra hardware support.

\paragraph{Tasks} We evaluate these protocols for federated learning tasks on three datasets: MNIST, \mbox{FEMNIST}~\citep{caldas2018leaf}, and CIFAR-10~\citep{krizhevsky2009learning}. For the MNIST dataset, the data is divided across 1000 devices in a non-i.i.d. way. We train a simple logistic regression model. For the FEMNIST, we adopt the same data partition as in LEAF \citep{caldas2018leaf} for 3400 users. For the CIFAR-10 dataset, we partition the data among 100 users. We train CNN \citep{mcmahan2017communication} models for FEMNIST and CIFAR-10.  

We implement the training experiments in the FedML framework~\citep{he2020fedml}. For large-scale simulation experiments, we use the FLSim framework that is used in FedBuff~\citep{nguyen2022federated}. More details about the experiments are provided in Appendix \ref{appendix-Experiment Details}.


\subsection{Performance Evaluation and Analysis}
\begin{table}[t]
	\caption{The wall-clock training time (in second) to reach a target validation accuracy for 32 users when training logistic regression on MNIST and CNN models on FEMNIST and CIFAR-10. A larger $\beta$ implies a more severe straggler problem. }
	\centering
         \resizebox{1.0\linewidth}{!}{
\begin{tabular}{|c|c|c|c|c|c|c|c|}
\hline
Datasets                  & Accuracy              & \multicolumn{1}{l|}{Delay Scale} & \begin{tabular}[c]{@{}c@{}}No SA\\ (AFL)\end{tabular} & \begin{tabular}[c]{@{}c@{}}BASA\\ (AFL)\end{tabular} & \begin{tabular}[c]{@{}c@{}}SecAgg\\ (SFL)\end{tabular} & \begin{tabular}[c]{@{}c@{}}LSA\\ (SFL)\end{tabular} & \begin{tabular}[c]{@{}c@{}}LSA\\ (AFL)\end{tabular} \\ \hline
\multirow{3}{*}{MNIST}    & \multirow{3}{*}{80\%} & $\beta=0$                        & 9.97$\pm{0.18}$                                       & 22.62$\pm{0.23}$                                     & 7.85$\pm{0.31}$                                        & 7.73$\pm{1.02}$                                     & 16.95$\pm{0.35}$                                    \\ \cline{3-8} 
                          &                       & $\beta=3.0$                      & 36.66$\pm{1.56}$                                      & 40.17$\pm{2.71}$                                     & 129.04$\pm{10.12}$                                     & 128.93$\pm{9.29}$                                   & 116.58$\pm{6.51}$                                   \\ \cline{3-8} 
                          &                       & $\beta=6.0$                      & 46.61$\pm{1.93}$                                      & 57.01$\pm{2.24}$                                     & 164.52$\pm{23.69}$                                     & 152.12$\pm{12.26}$                                  & 149.25$\pm{8.81}$                                   \\ \hline
\multirow{3}{*}{FEMNIST}  & \multirow{3}{*}{75\%} & $\beta=0$                        & 54.63$\pm{0.95}$                                      & 248.51$\pm{5.05}$                                    & 366.45$\pm{14.62}$                                     & 199.88$\pm{10.13}$                                  & 361.85$\pm{4.84}$                                   \\ \cline{3-8} 
                          &                       & $\beta=3.0$                      & 154.08$\pm{2.16}$                                     & 251.09$\pm{3.92}$                                    & 849.12$\pm{20.02}$                                     & 691.20$\pm{8.05}$                                   & 1005.86$\pm{27.13}$                                 \\ \cline{3-8} 
                          &                       & $\beta=6.0$                      & 421.51$\pm{1.92}$                                     & 475.89$\pm{7.41}$                                    & 1434.01$\pm{31.14}$                                    & 1229.09$\pm{33.56}$                                 & 2055.27$\pm{37.85}$                                 \\ \hline
\multirow{3}{*}{CIFAR-10} & \multirow{3}{*}{50\%} & $\beta=0$                        & 85.34$\pm{0.43}$                                      & 107.39$\pm{1.02}$                                    & 158.58$\pm{6.93}$                                      & 90.67$\pm{3.20}$                                    & 326.56$\pm{1.41}$                                   \\ \cline{3-8} 
                          &                       & $\beta=3.0$                      & 250.98$\pm{2.33}$                                     & 256.52$\pm{2.26}$                                    & 737.05$\pm{21.73}$                                     & 683.15$\pm{19.41}$                                  & 1166.71$\pm{15.74}$                                 \\ \cline{3-8} 
                          &                       & $\beta=6.0$                      & 477.96$\pm{8.34}$                                     & 479.03$\pm{5.02}$                                    & 1248.39$\pm{29.84}$                                    & 1143.12$\pm{56.92}$                                 & 2280.76$\pm{43.79}$                                 \\ \hline
\end{tabular}
	}
	\label{tab:performance}
\end{table}

For performance evaluation, we perform the training process with one server and 32 concurrent users. For SFL, the server randomly selects 32 users per round to participate in training. For AFL, we always keep 32 users training concurrently, and the server's buffer size is set to 10. In addition, we do not consider user dropouts in our experiments. To simulate the device heterogeneity, we introduce a random delay drawn from an exponential distribution to the local training time of users. The scale parameter $\beta=1/\lambda$ of the distribution controls the severity of the straggler problem, where $\lambda$ is the rate parameter of the exponential distribution. A larger $\beta$ causes a more severe straggler problem. We run the experiments with $\beta=0$, $\beta=3.0$ and $\beta=6.0$ respectively. 

We measure the required wall-clock time for each protocol to reach the target validation accuracy on each training task.  The result is shown in Table \ref{tab:performance}. We can find that when $\beta$ is not zero, which indicates the presence of stragglers, BASA takes the shortest total wall-clock time to reach the target accuracy compared with other secure aggregation protocols, achieving up to 4.7x speedup when $\beta=6$ and up to 4.5x speedup when $\beta=3$. This highlights BASA's efficiency in the presence of straggler problems while simultaneously ensuring secure aggregation.

Compared to FedBuff~(denoted as ``No SA'' in Table~\ref{tab:performance}) in which no secure aggregation is employed, BASA incurs up to 23\% extra time cost when $\beta=6$ and up to 62\% extra time cost when $\beta=3$. This suggests that the extra relative overhead introduced by BASA becomes smaller as the straggler problem becomes more severe. When $\beta = 0$, which indicates minimal or no straggler problems, BASA tends to be less efficient than secure aggregation of SFL. This is because BASA requires users to execute the protocol serially when submitting their model updates. Therefore, when multiple users complete training simultaneously, only one user can submit updates to the server at a time, while the rest must wait. One solution for this is to maintain multiple buffers on the server and run multiple BASA instances simultaneously, thus increasing the efficiency of the server. But this is not the focus of this paper, and is left for future work.



\subsection{Large-scale Training Simulation}

We simulate the large-scale training process using the simulator framework proposed in \mbox{FedBuff}~\citep{nguyen2022federated}. We perform the training tasks on the FEMNIST dataset and CIFAR-10 dataset with 200 concurrent users. We still adopt the exponentially distributed delay to simulate the device heterogeneity. The scale of the delay distribution varies from 1 to 6. We evaluate the total wall-clock time to reach a target accuracy for AFL with BASA and SFL with SecAgg. 

As shown in Figure \ref{simulation-BASA}, the results illustrate that AFL with BASA can significantly outperform SFL with SecAgg when the divergence~(heterogeneity) of user training time increases. This is mainly because BASA reduces the waiting time for ``straggler'' users. Moreover, AFL with BASA updates the global model parameters more frequently than SFL, as the buffer size is much smaller than the number of concurrent training users. We also perform the training tasks with different numbers of concurrent users, the results of which are shown in Appendix \ref{Additional_Experiments}.

\begin{figure}[t]
	\centering
             \subcaptionbox{CIFAR-10}{\includegraphics[width=0.45\linewidth]{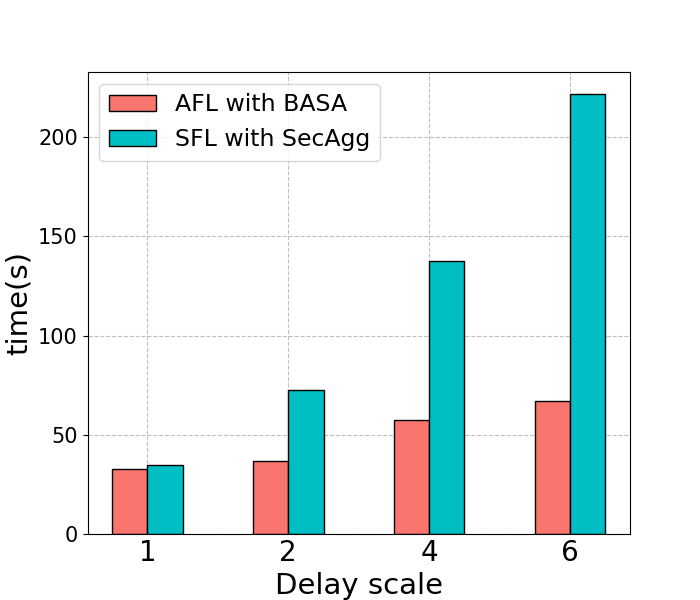}}
             \subcaptionbox{FEMNIST}{\includegraphics[width=0.45\linewidth]{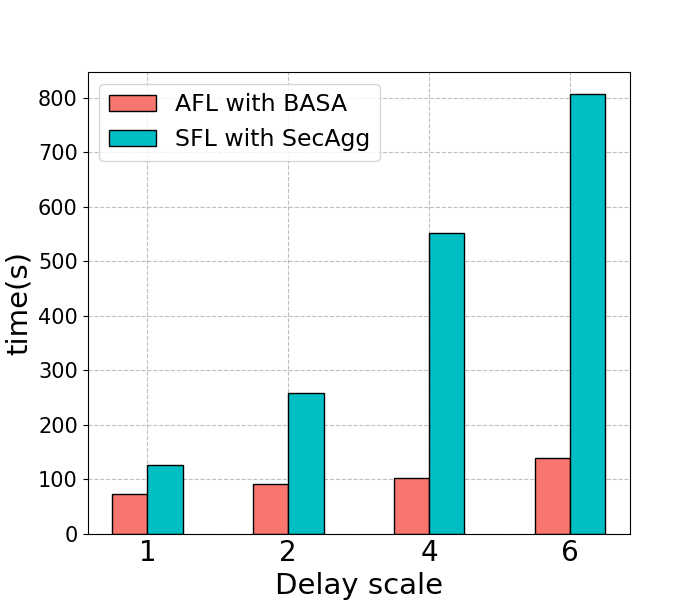}}
             
     \caption{The wall-clock time (in second)  to reach a target validation accuracy for 200 users when training CNN models on the FEMNIST and CIFAR-10 datasets with different delay scales. }
    \label{simulation-BASA}
\end{figure}


\subsection{Scalability Evaluation in AFL} 
\begin{figure}[t]
	\centering
             \subcaptionbox{}{\includegraphics[width=0.46\linewidth]{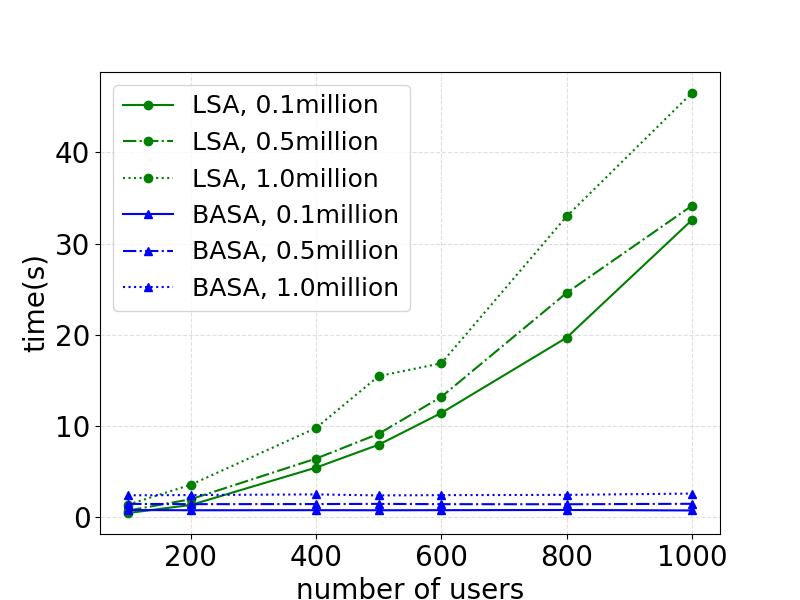}}
             \subcaptionbox{}{\includegraphics[width=0.46\linewidth]{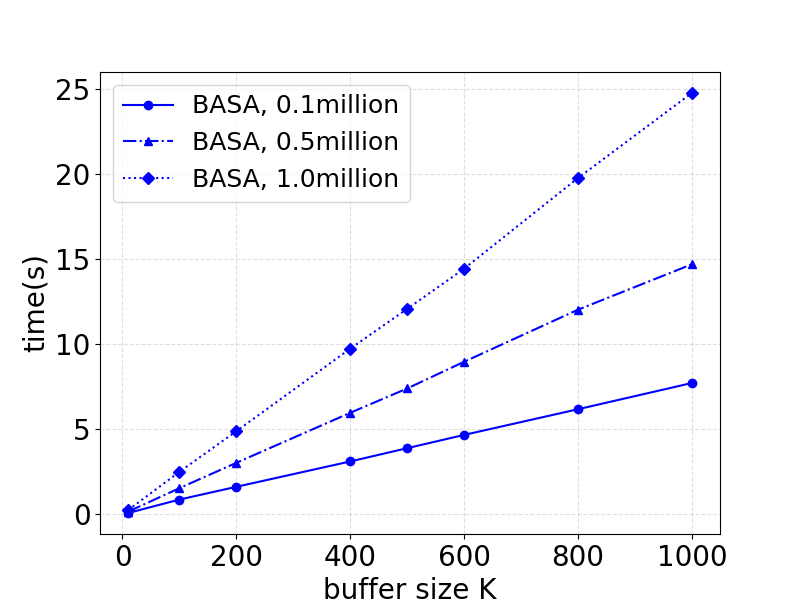}}
             
     \caption{The wall-clock time~(in second) taken by a user to execute the BASA protocol and LSA protocol with different user numbers and buffer sizes. The terms ``$0.1$million'', ``$0.5$million'', and ``$1.0$million'' refer to different model dimensions. }
    \label{scalability-BASA}
\end{figure}
To evaluate scalability, we compare the additional computation cost of BASA and LSA in AFL \citep{so2021secure} as the number of participating users and the model dimension increase. We test the model dimensions of $0.1$million, $0.5$million, and $1.0$million, respectively. The total number of users varies from 100 to 1000. Figure \ref{scalability-BASA}(a) illustrates the wall-clock time a user takes to execute the protocol once with a varying number of users and model dimensions. With the increasing number of users and model dimensions, the time required to execute LSA also experiences a considerable increase. This is mainly due to the additional time required for computing and sharing local masks among all users during the offline phase of LSA. On the contrary, BASA only requires sharing random mask seeds among users within the buffer. Hence, the time required to execute BASA only slightly increases when increasing the number of users and model dimensions.

 In Figure \ref{scalability-BASA}(b), we illustrate the wall-clock time for a user to execute BASA with buffer size $K$ varying from 10 to 1000. It can be seen that the execution time increases linearly with the buffer size, which is mainly caused by the ciphertext encryption/decryption in ABE and the computation of random masks. As BASA adopts a serial execution mode among users, the total time required to complete one aggregation process exhibits a quadratic increase to $K$. However, $K$ is often fixed to a small value in asynchronous federated learning, such as $K=10$ in FedBuff \citep{nguyen2022federated}. Therefore, BASA can still maintain high efficiency in large-scale training.


\section{Conclusion}
\label{conclusion}
In this paper, we introduce buffered asynchronous secure aggregation~(BASA), a novel secure aggregation protocol designed for compatibility with asynchronous federated learning. Compared with existing protocols, BASA improves the efficiency of secure aggregation in cross-device federated learning. Based on BASA, we propose the first AFL method which achieves secure aggregation without extra requirements on hardware. The empirical results show that our method not only achieves substantial speedups in scenarios with straggler problems but also exhibits remarkable scalability. This makes BASA notably practical for large-scale cross-device federated learning scenarios, where device heterogeneity is a common challenge. Our proposed method is limited to the honest-but-curious threat model. Extending BASA to other threat models, such as the malicious server model, is also interesting and will be pursued in future work.




\bibliography{neurips_2024}

\begin{thebibliography}{26}
\providecommand{\natexlab}[1]{#1}
\providecommand{\url}[1]{\texttt{#1}}
\expandafter\ifx\csname urlstyle\endcsname\relax
  \providecommand{\doi}[1]{doi: #1}\else
  \providecommand{\doi}{doi: \begingroup \urlstyle{rm}\Url}\fi

\bibitem[Bell et~al.(2020)Bell, Bonawitz, Gasc{\'o}n, Lepoint, and Raykova]{bell2020secure}
James~Henry Bell, Kallista~A Bonawitz, Adri{\`a} Gasc{\'o}n, Tancr{\`e}de Lepoint, and Mariana Raykova.
\newblock Secure single-server aggregation with (poly) logarithmic overhead.
\newblock In \emph{ACM SIGSAC Conference on Computer and Communications Security}, 2020.

\bibitem[Bethencourt et~al.(2007)Bethencourt, Sahai, and Waters]{bethencourt2007ciphertext}
John Bethencourt, Amit Sahai, and Brent Waters.
\newblock Ciphertext-policy attribute-based encryption.
\newblock In \emph{IEEE Symposium on Security and Privacy}, 2007.

\bibitem[Bonawitz et~al.(2017)Bonawitz, Ivanov, Kreuter, Marcedone, McMahan, Patel, Ramage, Segal, and Seth]{bonawitz2017practical}
Keith Bonawitz, Vladimir Ivanov, Ben Kreuter, Antonio Marcedone, H~Brendan McMahan, Sarvar Patel, Daniel Ramage, Aaron Segal, and Karn Seth.
\newblock Practical secure aggregation for privacy-preserving machine learning.
\newblock In \emph{ACM SIGSAC Conference on Computer and Communications Security}, 2017.

\bibitem[Caldas et~al.(2018)Caldas, Duddu, Wu, Li, Kone{\v{c}}n{\`y}, McMahan, Smith, and Talwalkar]{caldas2018leaf}
Sebastian Caldas, Sai Meher~Karthik Duddu, Peter Wu, Tian Li, Jakub Kone{\v{c}}n{\`y}, H~Brendan McMahan, Virginia Smith, and Ameet Talwalkar.
\newblock Leaf: A benchmark for federated settings.
\newblock \emph{CoRR}, abs/1812.01097, 2018.

\bibitem[Diao et~al.(2021)Diao, Ding, and Tarokh]{Diao0T21}
Enmao Diao, Jie Ding, and Vahid Tarokh.
\newblock Heterofl: Computation and communication efficient federated learning for heterogeneous clients.
\newblock In \emph{International Conference on Learning Representations}, 2021.

\bibitem[Dutta et~al.(2018)Dutta, Joshi, Ghosh, Dube, and Nagpurkar]{dutta2018slow}
Sanghamitra Dutta, Gauri Joshi, Soumyadip Ghosh, Parijat Dube, and Priya Nagpurkar.
\newblock Slow and stale gradients can win the race: Error-runtime trade-offs in distributed sgd.
\newblock In \emph{International Conference on Artificial Intelligence and Statistics}, 2018.

\bibitem[Geiping et~al.(2020)Geiping, Bauermeister, Dr{\"o}ge, and Moeller]{geiping2020inverting}
Jonas Geiping, Hartmut Bauermeister, Hannah Dr{\"o}ge, and Michael Moeller.
\newblock Inverting gradients-how easy is it to break privacy in federated learning?
\newblock In \emph{Advances in Neural Information Processing Systems}, 2020.

\bibitem[Goyal et~al.(2006)Goyal, Pandey, Sahai, and Waters]{goyal2006attribute}
Vipul Goyal, Omkant Pandey, Amit Sahai, and Brent Waters.
\newblock Attribute-based encryption for fine-grained access control of encrypted data.
\newblock In \emph{ACM Conference on Computer and Communications Security}, 2006.

\bibitem[He et~al.(2020)He, Li, So, Zeng, Zhang, Wang, Wang, Vepakomma, Singh, Qiu, et~al.]{he2020fedml}
Chaoyang He, Songze Li, Jinhyun So, Xiao Zeng, Mi~Zhang, Hongyi Wang, Xiaoyang Wang, Praneeth Vepakomma, Abhishek Singh, Hang Qiu, et~al.
\newblock Fedml: A research library and benchmark for federated machine learning.
\newblock \emph{arXiv preprint arXiv:2007.13518}, 2020.

\bibitem[Kairouz et~al.(2021)Kairouz, McMahan, Avent, Bellet, Bennis, Bhagoji, Bonawitz, Charles, Cormode, Cummings, et~al.]{kairouz2021advances}
Peter Kairouz, H~Brendan McMahan, Brendan Avent, Aur{\'e}lien Bellet, Mehdi Bennis, Arjun~Nitin Bhagoji, Kallista Bonawitz, Zachary Charles, Graham Cormode, Rachel Cummings, et~al.
\newblock Advances and open problems in federated learning.
\newblock \emph{Foundations and Trends{\textregistered} in Machine Learning}, 2021.

\bibitem[Krizhevsky et~al.(2009)Krizhevsky, Hinton, et~al.]{krizhevsky2009learning}
Alex Krizhevsky, Geoffrey Hinton, et~al.
\newblock Learning multiple layers of features from tiny images.
\newblock 2009.

\bibitem[Mammen(2021)]{mammen2021federated}
Priyanka~Mary Mammen.
\newblock Federated learning: Opportunities and challenges.
\newblock \emph{arXiv preprint arXiv:2101.05428}, 2021.

\bibitem[McMahan et~al.(2017)McMahan, Moore, Ramage, Hampson, and y~Arcas]{mcmahan2017communication}
Brendan McMahan, Eider Moore, Daniel Ramage, Seth Hampson, and Blaise~Aguera y~Arcas.
\newblock Communication-efficient learning of deep networks from decentralized data.
\newblock In \emph{Artificial Intelligence and Statistics}, 2017.

\bibitem[Melis et~al.(2019)Melis, Song, De~Cristofaro, and Shmatikov]{melis2019exploiting}
Luca Melis, Congzheng Song, Emiliano De~Cristofaro, and Vitaly Shmatikov.
\newblock Exploiting unintended feature leakage in collaborative learning.
\newblock In \emph{{IEEE} Symposium on Security and Privacy}, 2019.

\bibitem[Nguyen et~al.(2022)Nguyen, Malik, Zhan, Yousefpour, Rabbat, Malek, and Huba]{nguyen2022federated}
John Nguyen, Kshitiz Malik, Hongyuan Zhan, Ashkan Yousefpour, Mike Rabbat, Mani Malek, and Dzmitry Huba.
\newblock Federated learning with buffered asynchronous aggregation.
\newblock In \emph{International Conference on Artificial Intelligence and Statistics}, 2022.

\bibitem[Nishio and Yonetani(2019)]{nishio2019client}
Takayuki Nishio and Ryo Yonetani.
\newblock Client selection for federated learning with heterogeneous resources in mobile edge.
\newblock In \emph{IEEE International Conference on Communications}, 2019.

\bibitem[Reddi et~al.(2020)Reddi, Charles, Zaheer, Garrett, Rush, Kone{\v{c}}n{\`y}, Kumar, and McMahan]{reddi2020adaptive}
Sashank Reddi, Zachary Charles, Manzil Zaheer, Zachary Garrett, Keith Rush, Jakub Kone{\v{c}}n{\`y}, Sanjiv Kumar, and H~Brendan McMahan.
\newblock Adaptive federated optimization.
\newblock \emph{arXiv preprint arXiv:2003.00295}, 2020.

\bibitem[Shamir(1979)]{shamir1979share}
Adi Shamir.
\newblock How to share a secret.
\newblock \emph{Communications of the ACM}, 1979.

\bibitem[So et~al.(2021{\natexlab{a}})So, Ali, G{\"u}ler, and Avestimehr]{so2021secure}
Jinhyun So, Ramy~E Ali, Ba{\c{s}}ak G{\"u}ler, and A~Salman Avestimehr.
\newblock Secure aggregation for buffered asynchronous federated learning.
\newblock \emph{arXiv preprint arXiv:2110.02177}, 2021{\natexlab{a}}.

\bibitem[So et~al.(2021{\natexlab{b}})So, G{\"u}ler, and Avestimehr]{so2021turbo}
Jinhyun So, Ba{\c{s}}ak G{\"u}ler, and A~Salman Avestimehr.
\newblock Turbo-aggregate: Breaking the quadratic aggregation barrier in secure federated learning.
\newblock \emph{IEEE Journal on Selected Areas in Information Theory}, 2021{\natexlab{b}}.

\bibitem[So et~al.(2022)So, He, Yang, Li, Yu, E~Ali, Guler, and Avestimehr]{so2022lightsecagg}
Jinhyun So, Chaoyang He, Chien-Sheng Yang, Songze Li, Qian Yu, Ramy E~Ali, Basak Guler, and Salman Avestimehr.
\newblock Lightsecagg: a lightweight and versatile design for secure aggregation in federated learning.
\newblock \emph{Machine Learning and Systems}, 2022.

\bibitem[Xie et~al.(2019)Xie, Koyejo, and Gupta]{xie2019asynchronous}
Cong Xie, Sanmi Koyejo, and Indranil Gupta.
\newblock Asynchronous federated optimization.
\newblock \emph{arXiv preprint arXiv:1903.03934}, 2019.

\bibitem[Xu et~al.(2023)Xu, Qu, Xiang, and Gao]{xu2023asynchronous}
Chenhao Xu, Youyang Qu, Yong Xiang, and Longxiang Gao.
\newblock Asynchronous federated learning on heterogeneous devices: A survey.
\newblock \emph{Computer Science Review}, 2023.

\bibitem[Zhang et~al.(2020{\natexlab{a}})Zhang, Li, Xia, Wang, Yan, and Liu]{zhang2020batchcrypt}
Chengliang Zhang, Suyi Li, Junzhe Xia, Wei Wang, Feng Yan, and Yang Liu.
\newblock Batchcrypt: Efficient homomorphic encryption for cross-silo federated learning.
\newblock In \emph{USENIX Annual Technical Conference}, 2020{\natexlab{a}}.

\bibitem[Zhang et~al.(2020{\natexlab{b}})Zhang, Deng, Xu, Sun, Li, and Zheng]{zhang2020attribute}
Yinghui Zhang, Robert~H Deng, Shengmin Xu, Jianfei Sun, Qi~Li, and Dong Zheng.
\newblock Attribute-based encryption for cloud computing access control: A survey.
\newblock \emph{ACM Computing Surveys}, 2020{\natexlab{b}}.

\bibitem[Zhu et~al.(2019)Zhu, Liu, and Han]{zhu2019deep}
Ligeng Zhu, Zhijian Liu, and Song Han.
\newblock Deep leakage from gradients.
\newblock In \emph{Advances in Neural Information Processing Systems}, 2019.

\end{thebibliography}
\bibliographystyle{plainnat}
\newpage
\appendix

\section*{Appendix}

\section{AFL with BASA}
\label{BASA-AFL}
\begin{algorithm}[h]
	\caption{AFL with BASA }
	\label{alg:basa_asyncfl}
	\begin{algorithmic}[1]
		\REQUIRE buffer size $K$, global learning rate $\eta_g$;
		\ENSURE trained global model parameters;
		\STATE Initialize $k \leftarrow 0, t \leftarrow 0, \Bar{\Delta} \leftarrow 0, \mathcal{A} \leftarrow \{a_k^t \mid k<K\} $;
		\FOR{$ i = 0 $ to $ K$}
		\STATE $\mathcal{C}_i \leftarrow \emptyset$;
		\ENDFOR
		\REPEAT
        \STATE Send $w^{t}$ to available users asynchronously;
		\IF{$\text{receive connection from user $u$ }$}
		\STATE Send $\mathcal{A}$, $k$, $\mathcal{C}_k$ to user $u$ and wait for reply;
        \IF{timeout}
            \STATE Abort current connection;
        \ENDIF
		\STATE $\tilde \Delta_k \leftarrow $ masked local updates from user $u$;
		\STATE $\{c_{kj} \mid k<j<K\} \leftarrow $ ciphertexts from user $u$;
		\FOR{$ j = k+1 $ to $ K$}
		\STATE $\mathcal{C}_j \leftarrow \mathcal{C}_j \cup \{c_{kj}\} $;
		\ENDFOR 
		\STATE $\Bar{\Delta} \leftarrow \Bar{\Delta} + \tilde \Delta_k$; 
        \STATE $k \leftarrow k+1$;
		\ENDIF
		\IF{$k ==K$}
		\STATE $\Bar{\Delta} \leftarrow \frac{\Bar{\Delta}}{K}$;
		\STATE $\mathbf{w}^{t+1} \leftarrow \mathbf{w}^{t} - \eta \Bar{\Delta} $;
		\STATE $t \leftarrow t+1, \Bar{\Delta} \leftarrow 0, k \leftarrow 0$;
		\STATE Regenerate attributes set $\mathcal{A} \leftarrow \{a_k^t \mid k<K\} $;
		\FOR{$ i = 0 $ to $ K $}
		\STATE $ \mathcal{C}_i =\emptyset$;
		\ENDFOR
		\ENDIF
		\UNTIL{Convergence}
	\end{algorithmic}
\end{algorithm}

\section{Experiment Details}
\label{appendix-Experiment Details}
\subsection{Implementation}
We implement the training experiments in the FedML framework based on Pytorch \citep{he2020fedml}. All users' training tasks are evenly distributed among 8 NVIDIA Geforce RTX 2080Ti GPUs. The communication between the server and users is simulated through MPI. Our ABE implementation takes the basic algorithm of CP-ABE \citep{bethencourt2007ciphertext}. For large-scale simulation experiments, we use the FLSim framework, which is also used in FedBuff\citep{nguyen2022federated}. 

For the SFL training experiments, we specify the number of concurrently training users in each round. We use FedAvg with momentum as the training algorithm. For the AFL training experiments, we assume the users connect to the server at a constant rate and set a maximum value of concurrently training users. We use FedBuff as the training algorithm. To simulate device heterogeneity, we use an exponential distribution to sample the training delay of devices.

\subsection{Dataset Partition}
For all the datasets used in the experiments, we divided them into different numbers of users in a non-i.i.d manner. Specifically, we partition the MNIST dataset and CIFAR-10 dataset according to the Dirichlet distribution. The parameter $\alpha$ is set to 0.5. This can be configured in the FedML framework \citep{he2020fedml}. The total number of users is 1000 and 100, respectively. For the FEMNIST dataset, we follow the underlying distribution of the raw data, which varies among 3400 users \citep{caldas2018leaf}.

For the large-scale training simulation experiment, we increase the total number of users to 200 when training on the CIFAR-10 dataset.

\subsection{Model Structure}
We train different models for each dataset. For the MNIST dataset, we use a simple logistic regression model, where the dimension is 7850. For the FEMNIST dataset, we train a CNN model recommended in \citep{reddi2020adaptive} for EMNIST experiments. The model has 2 convolutional layers (with 3x3 kernels), max pooling and dropout, followed by a 128-unit dense layer. 

For the CIFAR-10 dataset, we use a CNN model with 2 convolutional layers and 3 fully connected layers for the training experiments. The convolutional layer has 64 channels and 5x5 kernels, with a stride of 1, and padding of 2. This model is integrated into the FedML framework \citep{he2020fedml}. For the simulation training experiment, we train another simple CNN model with 4 convolutional layers and 1 fully connected layer, as the same in \citep{nguyen2022federated}. The convolutional layer has 32 channels and 3x3 kernels, with a stride of 1, and padding of 2.

\section{Additional Experiments}
\label{Additional_Experiments}
\subsection{Impact of Concurrency}
\begin{figure}[t]
	\centering
             \includegraphics[width=0.6\linewidth]{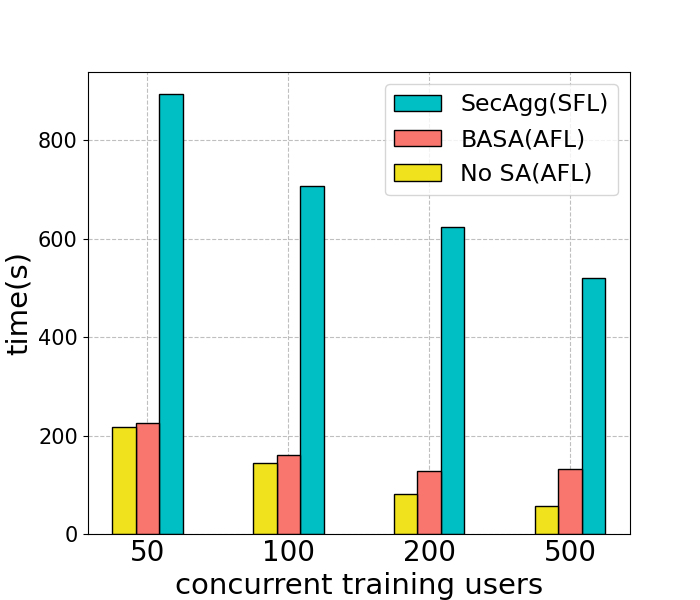}

     \caption{The wall-clock time (in second) required to reach a target validation accuracy for training CNN models on the FEMNIST dataset with different numbers of concurrent training users. }
    \label{concurrency-BASA}
\end{figure}
In the simulation experiments, we also perform the training tasks on the FEMNIST dataset and the CIFAR-10 dataset with different numbers of concurrent users. In Figure \ref{concurrency-BASA}, we compare the total wall-clock time to reach the target accuracy of AFL with BASA, FedBuff(AFL) without secure aggregation, and SFL with SecAgg. The buffer size of BASA and FedBuff is 10, and the delay scale is fixed to 5. As the number of concurrent training users increases, AFL with BASA is consistently more efficient than secure aggregation in SFL with SecAgg.

\end{document}